\begin{document}

\title{A Simple Proof of the Mutual Incoherence Condition for Orthogonal Matching Pursuit}

\author{\IEEEauthorblockN{Jian Wang and Byonghyo Shim}\\
\IEEEauthorblockA{Information System Laboratory \\School of Information and Communication\\
Korea University,
Seoul, Korea 136-713\\
Email: jwang@ipl.korea.ac.kr, bshim@korea.ac.kr \\
Phone: 82-2-3290-4842 \\}
\thanks{This work was supported by the National Research Foundation of Korea (NRF) grant funded by the Korea government (MEST) (No. 2010-0012525) and the research grant from the second BK21 project.}
} \maketitle

\begin{abstract}
This paper provides a simple proof of the mutual incoherence condition ($\mu < \frac{1} {{2K - 1}}$)
under which $K$-sparse signal can be accurately reconstructed from a small number of linear measurements using the orthogonal matching pursuit (OMP) algorithm.
Our proof, based on mathematical induction, is built on an observation that the general step of the OMP process is in essence same as the initial step since the residual is considered as a new measurement preserving the sparsity level of an input vector.
\end{abstract}

\begin{keywords}
Compressive sensing, orthogonal matching pursuit (OMP),
restricted isometric property (RIP), mutual incoherence condition.
\end{keywords}

\IEEEpeerreviewmaketitle
\setcounter{page}{1}

%
\newtheorem{thm}{Theorem}
\newtheorem{cor}[thm]{Corollary}
\newtheorem{lem}[thm]{Lemma}
\newtheorem{rem}{Remark}
\newtheorem{defi}{Definition}
\newtheorem{prop}{Proposition}

\section{Introduction}

As a sampling paradigm guaranteeing the reconstruction of sparse
signal with sampling rate significantly lower than the Nyquist rate,
compressive sensing (CS) has received considerable attention in
recent years
\cite{tropp2007signal},\cite{davenport2010analysis},\cite{candes2008restricted}.
The main goal of the CS is to accurately reconstruct a high dimensional sparse vector
using a small number linear measurements. Specifically, for a given matrix ${\mathbf{\Phi} \in {\mathbb{R}^{m \times n}} }$ ($n > m$), the CS recovery algorithm generates an estimate of $K$-sparse
vector ${\mathbf{x}} \in {\mathbb{R}^n}$ from a set of linear
measurements
\begin{eqnarray}
{\mathbf{y = \Phi x}}.
\end{eqnarray}
Although this task seems to be a severely ill-posed inverse problem, due to the prior knowledge of
sparsity information, $\mathbf{x}$ can be perfectly reconstructed via
properly designed recovery algorithm.
Among many greedy search algorithms developed for this purpose, OMP algorithm has received special attention due to its simplicity and competitive reconstruction performance \cite{tropp2004greed}.

Theoretical analysis of OMP to date has concentrated primarily on
two fronts. The first approach is based on the restricted isometric property (RIP).
A sensing matrix $\mathbf{\Phi}$ satisfies the RIP of order $K$ if there exists a constant $\delta$
such that \cite{dai2009subspace}
\begin{eqnarray} \label{eq:RIP}
\left( {1 - {\delta}} \right)\left\| {\mathbf{x}} \right\|_2^2 \leq
\left\| {{\mathbf{\Phi x}}} \right\|_2^2 \leq \left( {1 + {\delta}}
\right)\left\| {\mathbf{x}} \right\|_2^2
\end{eqnarray}
for any $K$-sparse vector $\mathbf{x}$ ($\left\|\mathbf{x}\right\|_0
\leq K$).
In particular, the minimum of all constants $\delta$
satisfying (\ref{eq:RIP}) is called the isometry constant $\delta
_K$. Wakin and Davenport have shown that the OMP can
reconstruct all $K$-sparse signals if ${\delta _{K + 1}} < \frac{1}
{{3\sqrt K }}$ \cite{davenport2010analysis}. This result has been recently improved by Wang and Shim to ${\delta _{K + 1}} < \frac{1}{{\sqrt K } + 1}$ \cite{wang_asilomar}.
The second approach is based on the coherence parameter. The coherence parameter $\mu$ of the sensing matrix $ \mathbf{\Phi} $ is defined as
$$ \mu = \mathop {\max }\limits_{i \neq j}  | \langle\varphi_i , \varphi_j \rangle| $$
where $ \varphi_i $ and $ \varphi_j $ are two column vectors of $ \mathbf{\Phi} $.
%
When the columns of $\mathbf{\Phi}$ have unit norm and satisfy the mutual
incoherence condition given by $\mu < \frac{1}{2K-1}$, the OMP will
recover $K$-sparse signal x from the measurements $\mathbf{y} =
\mathbf{\Phi} \mathbf{x}$ \cite{tropp2004greed}.
It is well known that this result is also applied to $\ell_1$-minimization approach \cite{huo2001}.

In this work, we provide a simple proof of the mutual incoherence condition for the OMP using mathematical induction.
Our proof is built on an observation that the general step of the OMP process is in essence same as the initial step since the residual is considered as a new measurement preserving the sparsity level of an input vector.
The mutual incoherence condition for the OMP is formally described in the following theorem.

\begin{table}
\begin{center}
\caption{OMP Algorithm} \label{OMP} \small
\begin{tabular}{ll} \hline \hline
Input:  &$\mathbf{y}$, $\mathbf{\Phi}$, $K$.\\
Initialize: &$k = 0$, $\mathbf{r}^{0} = \mathbf{y}$, $T^{0} = \emptyset$.\\
While &$ k < K$ \\
&$k = k + 1$.\\
&(Identify)\hspace{5.3mm}$t^{k} = \arg\max_{j}|\langle\mathbf{r}^{k-1},\mathbf{\varphi}_{j}\rangle|$.\\
&(Augment)\hspace{3.1mm}$T^{k} = T^{k-1}\cup \{t^{k}\}$.\\
&(Estimate)\hspace{4mm}${{\mathbf{\hat x}}_{{T^k}}} = \arg  \mathop
{\min}\limits_{\mathbf{x}} {\left\|\mathbf{y}-\mathbf{\Phi}_{{T^k}}
\mathbf{x}\right\|}_{2}$. \\
&(Update)\hspace{6.4mm}${{\mathbf{r}}^k} = {\mathbf{y}} -
{{\mathbf{\Phi }}_{{T^k}}}{{\mathbf{\hat
x}}_{{T^k}}}$.\\

End\\
Output:  &$\hat{\mathbf{ x}} = \arg  \mathop
{\min}\limits_{\mathbf{x}:{\text{supp}}\left(
     {\mathbf{x}} \right) = T^{{K}}}
     {\left\| {{\mathbf{y - \Phi x}}}
     \right\|_2}$.\\
\hline
\end{tabular}
\end{center}
\end{table}

%
%
\vspace{0.8cm}
\begin{thm}[{\em Mutual incoherence condition for OMP}]\label{general}
For any $K$-sparse vector $\mathbf{x}$, the OMP algorithm perfectly
recovers $\mathbf{x}$ from the measurements $\mathbf{y} = \mathbf{\Phi x}$ if the coherence parameter $\mu$ satisfies
\begin{eqnarray}
\mu  < \frac{1} {{2K - 1}}.
\end{eqnarray}
\end{thm}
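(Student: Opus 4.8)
The plan is to argue by induction on the iteration index, showing that at every step OMP selects an index belonging to the true support $T = \mathrm{supp}(\mathbf{x})$ (with $|T| \le K$) and that a previously chosen index is never reselected; after $K$ such steps $T^K = T$, and the terminating least-squares solve returns $\mathbf{x}$ exactly. The whole argument rests on a single correlation estimate, which I would establish once and then reuse at every iteration.

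That estimate concerns an arbitrary vector of the form $\mathbf{r} = \mathbf{\Phi}\mathbf{z}$ with $\mathrm{supp}(\mathbf{z}) = S$ and $|S| \le K$. Let $i^\star \in S$ be an index where $|z_i|$ is maximal. Using the unit-norm columns and the definition of $\mu$, I would isolate the diagonal term $z_{i^\star}$ and control the remaining $|S|-1 \le K-1$ cross terms by $\mu\|\mathbf{z}\|_\infty$ each, yielding the on-support lower bound $|\langle \mathbf{r}, \varphi_{i^\star}\rangle| \ge \|\mathbf{z}\|_\infty\left(1 - (K-1)\mu\right)$, and for $j \notin S$ the off-support upper bound $|\langle \mathbf{r}, \varphi_j\rangle| \le K\mu\|\mathbf{z}\|_\infty$ (all $|S| \le K$ terms being cross terms). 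Requiring the former to strictly dominate the latter gives $1 - (K-1)\mu > K\mu$, that is, $\mu < \frac{1}{2K-1}$. Hence under the hypothesis the index maximizing $|\langle \mathbf{r}, \varphi_j\rangle|$ necessarily lies in $S$.

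For the base case, $\mathbf{r}^0 = \mathbf{y} = \mathbf{\Phi}\mathbf{x}$ with $S = T$, so the estimate immediately gives $t^1 \in T$. For the inductive step I would exploit the key observation that the residual is itself a measurement of a $K$-sparse vector on $T$: since $T^k \subseteq T$, writing $\mathbf{r}^k = \mathbf{\Phi}_T\mathbf{x}_T - \mathbf{\Phi}_{T^k}\hat{\mathbf{x}}_{T^k} = \mathbf{\Phi}\mathbf{z}$, where $\mathbf{z}$ is $\mathbf{x}$ minus the zero-padded least-squares estimate $\hat{\mathbf{x}}_{T^k}$, both terms are supported within $T$, so $\mathrm{supp}(\mathbf{z}) \subseteq T$ and thus $\|\mathbf{z}\|_0 \le K$. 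Applying the same estimate to $\mathbf{r}^k = \mathbf{\Phi}\mathbf{z}$ forces $t^{k+1} \in \mathrm{supp}(\mathbf{z}) \subseteq T$. To rule out reselection, I would use that the least-squares update makes $\mathbf{r}^k \perp \mathrm{span}(\mathbf{\Phi}_{T^k})$, so $\langle \mathbf{r}^k, \varphi_j\rangle = 0$ for every $j \in T^k$, while the lower bound keeps the maximal correlation strictly positive (here one needs $\mathbf{z} \ne \mathbf{0}$, which holds whenever $T^k \subsetneq T$ because the columns of $\mathbf{\Phi}_T$ are linearly independent under $\mu < \frac{1}{2K-1}$). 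Therefore $t^{k+1}$ is a genuinely new element of $T$.

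The step I expect to require the most care is this inductive reduction: formulating $\mathbf{z}$ so that its support is confined to $T$ (hence $K$-sparse) while simultaneously invoking orthogonality of the residual to guarantee distinctness of the selected atom. Once this is in place the conclusion is automatic, since after $K$ iterations the $K$ distinct correctly identified indices exhaust $T$, giving $T^K = T$, and full column rank of $\mathbf{\Phi}_T$ makes the final least-squares estimate equal to $\mathbf{x}$.
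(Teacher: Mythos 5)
Your proposal is correct, and its skeleton coincides with the paper's: induction on the iteration, the key observation that the residual $\mathbf{r}^k = \mathbf{\Phi}\mathbf{z}$ is itself a measurement of a vector supported inside $T$ (hence $K$-sparse), and orthogonality of $\mathbf{r}^k$ to $\mathrm{span}(\mathbf{\Phi}_{T^k})$ to rule out reselection. Where you genuinely diverge is in the core correlation estimate. The paper works in $\ell_2$ with RIP machinery: it lower-bounds the best on-support correlation by $\frac{1}{\sqrt{K}}\left\|\mathbf{\Phi}_T'\mathbf{\Phi}_T\mathbf{x}_T\right\|_2 \geq \frac{1}{\sqrt{K}}(1-\delta_K)\left\|\mathbf{x}_T\right\|_2$ via its Lemma 2, converts to coherence through the separate bridge $\delta_K \leq (K-1)\mu$ (its Lemma 3, proved with spectral-norm inequalities such as $\|\mathbf{A}\|_2 \leq K\|\mathbf{A}\|_{\max}$), and upper-bounds off-support correlations by $\sqrt{K}\,\mu\left\|\mathbf{x}_T\right\|_2$; the two $\sqrt{K}$ factors cancel to give exactly $\mu < \frac{1}{2K-1}$. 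You instead work in $\ell_\infty$, in the classical Tropp/Donoho--Huo style: isolating the largest-magnitude coordinate gives $|\langle \mathbf{r},\varphi_{i^\star}\rangle| \geq (1-(K-1)\mu)\|\mathbf{z}\|_\infty$ on support and $|\langle \mathbf{r},\varphi_j\rangle| \leq K\mu\|\mathbf{z}\|_\infty$ off support, reaching the same threshold with no RIP at all. Your route is the more elementary and self-contained, dispensing entirely with the paper's Lemmas 1--3; the paper's route buys the reusable coherence-to-RIP bridge $\delta_K \leq (K-1)\mu$, which ties its result to the RIP-based OMP literature it cites. Two minor points: your appeal to linear independence of the columns of $\mathbf{\Phi}_T$ to get $\mathbf{z} \neq \mathbf{0}$ is unnecessary --- when $T^k \subsetneq T$ the entries of $\mathbf{z}$ on $T\setminus T^k$ equal the corresponding nonzero entries of $\mathbf{x}$, so $\|\mathbf{z}\|_\infty > 0$ directly, and your lower bound then also shows $\mathbf{r}^k \neq \mathbf{0}$ (linear independence, which does follow from $(K-1)\mu < 1$, is still needed for your final least-squares step, where you correctly invoke it); and your closing count $T^K = T$ tacitly assumes $|T| = K$, but this is the same level of rigor as the paper itself, and the case $|T| < K$ is handled by noting the residual vanishes once $T$ is exhausted.
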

\vspace{0.7cm}

\section{Simple Proof of Theorem \ref{general}}
Before presenting the proof of Theorem \ref{general}, we provide lemmas useful in our analysis.

\vspace{0.3cm}
\begin{lem}[Norm inequality \cite{golub1996matrix}] \label{lem:norm}
For $\mathbf{A} $ and $\mathbf{B} $ in $\mathbb{R}^{m\times n}$, $\alpha \in \mathbb{R}$, and $\mathbf{x} \in \mathbb{R}^n$, following inequalities are satisfied:
 $$\| \mathbf{A} \|_2 \leq \sqrt{mn}  \| \mathbf{A} \|_{\max},$$
 $$\| \mathbf{A} + \mathbf{B} \|_2 \leq \| \mathbf{A} \|_2  +  \| \mathbf{B} \|_2,$$
 $$\| \mathbf{A}   \mathbf{B} \|_2 \leq \| \mathbf{A} \|_2    \| \mathbf{B} \|_2,$$
 $$\|  \alpha  \mathbf{A} \|_2 = |\alpha|   \| \mathbf{A} \|_2,$$
 $$\|  \mathbf{A x} \|_2 \leq \| \mathbf{A} \|_2 \| \mathbf{x} \|_2,$$
where $\| \mathbf{A} \|_2$ is the spectral norm of $\mathbf{A}$ and $\| \mathbf{A} \|_{\max}$ is the maximum absolute value of elements of $\mathbf{A}$ (i.e.,  $\| \mathbf{A} \|_{\max} = \max_{i,j} | a_{i,j} |$).
\end{lem}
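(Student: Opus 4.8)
The plan is to work from the variational characterization of the spectral norm, $\|\mathbf{A}\|_2 = \sup_{\mathbf{x} \neq \mathbf{0}} \|\mathbf{A}\mathbf{x}\|_2 / \|\mathbf{x}\|_2$, which yields four of the five inequalities almost immediately, and to treat the entrywise bound separately. First I would establish the compatibility inequality $\|\mathbf{A}\mathbf{x}\|_2 \le \|\mathbf{A}\|_2 \|\mathbf{x}\|_2$ (the last line of the lemma), since it is the workhorse for the remaining cases: for $\mathbf{x} \neq \mathbf{0}$ it is just the statement that $\|\mathbf{A}\mathbf{x}\|_2/\|\mathbf{x}\|_2$ cannot exceed the supremum defining $\|\mathbf{A}\|_2$, and for $\mathbf{x}=\mathbf{0}$ it holds trivially.

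With this in hand, the homogeneity, subadditivity, and submultiplicativity follow mechanically. For homogeneity I would factor the scalar through the vector norm, $\|\alpha\mathbf{A}\mathbf{x}\|_2 = |\alpha|\,\|\mathbf{A}\mathbf{x}\|_2$, and take the supremum over unit-norm $\mathbf{x}$ to obtain $\|\alpha\mathbf{A}\|_2 = |\alpha|\,\|\mathbf{A}\|_2$. For the triangle inequality I would invoke the vector triangle inequality followed by compatibility, $\|(\mathbf{A}+\mathbf{B})\mathbf{x}\|_2 \le \|\mathbf{A}\mathbf{x}\|_2 + \|\mathbf{B}\mathbf{x}\|_2 \le (\|\mathbf{A}\|_2 + \|\mathbf{B}\|_2)\|\mathbf{x}\|_2$, and again pass to the supremum. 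For submultiplicativity I would apply compatibility twice, $\|\mathbf{A}\mathbf{B}\mathbf{x}\|_2 \le \|\mathbf{A}\|_2\|\mathbf{B}\mathbf{x}\|_2 \le \|\mathbf{A}\|_2\|\mathbf{B}\|_2\|\mathbf{x}\|_2$, dividing by $\|\mathbf{x}\|_2$ and taking the supremum, reading the product for conformable matrices as is standard for this cited estimate.

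The only inequality not falling straight out of the operator-norm definition is the entrywise bound $\|\mathbf{A}\|_2 \le \sqrt{mn}\,\|\mathbf{A}\|_{\max}$, which I expect to be the step requiring genuine work, since it relates the spectral norm to a quantity that is not itself an operator norm. My plan is to route through the Frobenius norm: first show $\|\mathbf{A}\|_2 \le \|\mathbf{A}\|_F$ by noting that $\|\mathbf{A}\|_2^2 = \lambda_{\max}(\mathbf{A}^\top\mathbf{A})$ is dominated by the sum of the nonnegative eigenvalues of the positive semidefinite matrix $\mathbf{A}^\top\mathbf{A}$, which equals $\mathrm{tr}(\mathbf{A}^\top\mathbf{A}) = \|\mathbf{A}\|_F^2$; then bound $\|\mathbf{A}\|_F^2 = \sum_{i,j} a_{ij}^2 \le mn\,\|\mathbf{A}\|_{\max}^2$ since there are $mn$ entries each at most $\|\mathbf{A}\|_{\max}$ in magnitude, and take square roots. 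A self-contained alternative avoiding singular values is to estimate $\|\mathbf{A}\mathbf{x}\|_2$ directly: bounding each coordinate by $|(\mathbf{A}\mathbf{x})_i| \le \|\mathbf{A}\|_{\max}\|\mathbf{x}\|_1 \le \sqrt{n}\,\|\mathbf{A}\|_{\max}\|\mathbf{x}\|_2$ via Cauchy--Schwarz and summing the $m$ coordinates gives $\|\mathbf{A}\mathbf{x}\|_2 \le \sqrt{mn}\,\|\mathbf{A}\|_{\max}\|\mathbf{x}\|_2$, whence the claim follows by a supremum. Either route is elementary; the only care lies in correctly accounting for the $m$ and $n$ factors.
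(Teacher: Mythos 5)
Your proof is correct, but there is nothing in the paper to compare it against: the authors state this lemma as a known fact, citing Golub and Van Loan's \emph{Matrix Computations}, and give no proof of their own, so your write-up supplies details the paper deliberately omits. The four operator-norm consequences (compatibility, homogeneity, subadditivity, submultiplicativity) are handled exactly as the variational characterization dictates, and you correctly identify the entrywise bound $\| \mathbf{A} \|_2 \leq \sqrt{mn}\, \| \mathbf{A} \|_{\max}$ as the only item with real content; both of your routes to it are sound --- the Frobenius route via $\|\mathbf{A}\|_2^2 = \lambda_{\max}(\mathbf{A}^\top \mathbf{A}) \leq \mathrm{tr}(\mathbf{A}^\top \mathbf{A}) = \|\mathbf{A}\|_F^2 \leq mn \, \|\mathbf{A}\|_{\max}^2$, and the direct coordinatewise Cauchy--Schwarz estimate, which has the advantage of never mentioning eigenvalues. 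You were also right to flag the submultiplicativity line: as literally stated, with $\mathbf{A}, \mathbf{B} \in \mathbb{R}^{m \times n}$, the product $\mathbf{A}\mathbf{B}$ is undefined unless $m = n$, so your reading for conformable matrices is the intended one; note that the paper only ever invokes the lemma for square matrices such as $\mathbf{\Phi}_T' \mathbf{\Phi}_T \in \mathbb{R}^{K \times K}$, where the factor $\sqrt{mn} = \sqrt{K^2} = K$ is precisely what appears in the paper's step (\ref{eq:xx3}) of the proof of Lemma \ref{lem:rip_mip}.
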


\vspace{0.3cm}
\begin{lem} [A direct consequence of RIP \cite{dai2009subspace}]
\label{lem:rips} 
Let $I \subset \left\{ {1,2, \cdots,n} \right\}$ and $\mathbf{\Phi }_I$ be the restriction of the columns of $\mathbf{\Phi}$ to a support set $I$.
If $\delta_{\left| I \right|} < 1$,
then for any ${\mathbf{u}} \in {\mathbb{R}^{\left| I \right|}}$,
\begin{eqnarray}
\left( {1 - {\delta _{\left| I \right|}}} \right){\left\|
{\mathbf{u}} \right\|_2}  \leq {\left\| {{\mathbf{\Phi
}}_I'{{\mathbf{\Phi }}_I}{\mathbf{u}}} \right\|_2}
 \leq \left( {1 + {\delta _{\left| I
\right|}}} \right){\left\| {\mathbf{u}} \right\|_2}. \nonumber
\end{eqnarray}
\end{lem}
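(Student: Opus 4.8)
The plan is to read the claimed two-sided bound as a statement about the extreme eigenvalues of the Gram matrix $\mathbf{G} := \mathbf{\Phi}_I'\mathbf{\Phi}_I$, which is symmetric and positive semidefinite. First I would observe that for any $\mathbf{u} \in \mathbb{R}^{|I|}$ the vector $\mathbf{\Phi}_I\mathbf{u}$ equals $\mathbf{\Phi x}$, where $\mathbf{x}$ is the $|I|$-sparse vector obtained by placing the entries of $\mathbf{u}$ on the support $I$ and zeros elsewhere; since $\|\mathbf{x}\|_2 = \|\mathbf{u}\|_2$, the RIP of order $|I|$ in (\ref{eq:RIP}) translates directly into
$$(1 - \delta_{|I|})\|\mathbf{u}\|_2^2 \leq \|\mathbf{\Phi}_I\mathbf{u}\|_2^2 = \mathbf{u}'\mathbf{G}\mathbf{u} \leq (1 + \delta_{|I|})\|\mathbf{u}\|_2^2.$$
Because the Rayleigh quotient of the symmetric matrix $\mathbf{G}$ is confined between $1-\delta_{|I|}$ and $1+\delta_{|I|}$ for every $\mathbf{u}$, every eigenvalue $\lambda$ of $\mathbf{G}$ satisfies $1-\delta_{|I|} \leq \lambda \leq 1+\delta_{|I|}$.

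Next I would pass from the quadratic form to the norm of $\mathbf{G}\mathbf{u}$. Writing the spectral decomposition $\mathbf{G} = \mathbf{V}\Lambda\mathbf{V}'$ with $\mathbf{V}$ orthogonal and $\Lambda = \mathrm{diag}(\lambda_1, \ldots, \lambda_{|I|})$, and setting $\mathbf{w} = \mathbf{V}'\mathbf{u}$ (so that $\|\mathbf{w}\|_2 = \|\mathbf{u}\|_2$), I would compute
$$\|\mathbf{G}\mathbf{u}\|_2^2 = \|\mathbf{V}\Lambda\mathbf{w}\|_2^2 = \|\Lambda\mathbf{w}\|_2^2 = \sum_{i} \lambda_i^2\, w_i^2.$$
It then remains only to bound each $\lambda_i^2$. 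The hypothesis $\delta_{|I|} < 1$ guarantees $1 - \delta_{|I|} > 0$, so that $\lambda_i \geq 1 - \delta_{|I|} > 0$ forces $\lambda_i^2 \geq (1-\delta_{|I|})^2$, while $\lambda_i \leq 1 + \delta_{|I|}$ gives $\lambda_i^2 \leq (1+\delta_{|I|})^2$. Substituting these into the sum and factoring out $\sum_i w_i^2 = \|\mathbf{u}\|_2^2$ yields
$$(1-\delta_{|I|})^2 \|\mathbf{u}\|_2^2 \leq \|\mathbf{G}\mathbf{u}\|_2^2 \leq (1+\delta_{|I|})^2 \|\mathbf{u}\|_2^2,$$
and taking nonnegative square roots gives exactly the claimed inequality.

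The argument is short, so there is no single deep obstacle; the one point demanding care is the lower bound, and this is precisely where $\delta_{|I|} < 1$ is indispensable. Passing from $\lambda_i \geq 1 - \delta_{|I|}$ to $\lambda_i^2 \geq (1-\delta_{|I|})^2$ is legitimate only when the right-hand side is itself nonnegative; were $\delta_{|I|} \geq 1$ permitted, the smallest eigenvalue of $\mathbf{G}$ could collapse toward $0$ and the lower bound would degenerate into the trivial statement $0 \leq \|\mathbf{G}\mathbf{u}\|_2$. I would therefore flag explicitly that $\delta_{|I|} < 1$ is what keeps the spectrum of $\mathbf{G}$ bounded away from the origin, which is what makes the two-sided estimate meaningful and usable in the later induction.
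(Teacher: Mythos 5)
Your proof is correct. The paper states this lemma without proof, simply citing \cite{dai2009subspace}, and your argument---using the RIP on the embedded $|I|$-sparse vector to confine the Rayleigh quotient, hence all eigenvalues of the Gram matrix $\mathbf{\Phi}_I'\mathbf{\Phi}_I$, to $[1-\delta_{|I|},\, 1+\delta_{|I|}]$, then diagonalizing to pass from the quadratic form to $\|\mathbf{\Phi}_I'\mathbf{\Phi}_I\mathbf{u}\|_2$---is essentially the standard proof found in that reference, including the correct observation that $\delta_{|I|}<1$ is what makes the lower bound nondegenerate.
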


\vspace{0.3cm}
\begin{lem}\label{lem:rip_mip}
The isometric constant ${\delta _K}$ for the sensing matrix $\mathbf{\Phi}$ satisfies
$${\delta _K} \leq \left( {K - 1} \right)\mu. $$
\end{lem}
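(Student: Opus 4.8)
The plan is to translate the restricted isometry constant $\delta_K$ into a statement about the Gram matrix of the selected columns, and then to control that Gram matrix using only the pairwise inner-product bound $\mu$. Fix any support set $I$ with $|I| = K$ and write $\mathbf{G} = \mathbf{\Phi}_I' \mathbf{\Phi}_I$. Since the columns of $\mathbf{\Phi}$ have unit norm, the diagonal of $\mathbf{G}$ consists entirely of ones, while each off-diagonal entry is an inner product $\langle \varphi_i, \varphi_j \rangle$ with $|\langle \varphi_i, \varphi_j \rangle| \le \mu$. Hence $\mathbf{G} = \mathbf{I} + \mathbf{E}$, where $\mathbf{E}$ is symmetric, has a zero diagonal, and satisfies $\| \mathbf{E} \|_{\max} \le \mu$.

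Next I would recall the variational meaning of $\delta_K$. For any $K$-sparse vector supported on $I$ with coefficient vector $\mathbf{u} \in \mathbb{R}^{K}$, the RIP inequality (\ref{eq:RIP}) is equivalent to $\big| \, \| \mathbf{\Phi}_I \mathbf{u} \|_2^2 - \| \mathbf{u} \|_2^2 \, \big| = | \mathbf{u}' \mathbf{E} \mathbf{u} |$, so that $\delta_K$ equals the supremum of $| \mathbf{u}' \mathbf{E} \mathbf{u} | / \| \mathbf{u} \|_2^2$ taken over all nonzero $\mathbf{u}$ and all such $I$. It therefore suffices to bound the quadratic form $| \mathbf{u}' \mathbf{E} \mathbf{u} |$ for a unit vector $\mathbf{u}$.

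The core estimate expands the quadratic form over the off-diagonal terms only, $| \mathbf{u}' \mathbf{E} \mathbf{u} | \le \mu \sum_{i \neq j} |u_i| \, |u_j| = \mu \big[ ( \sum_i |u_i| )^2 - \sum_i u_i^2 \big]$, and then applies Cauchy--Schwarz in the form $( \sum_i |u_i| )^2 \le K \sum_i u_i^2 = K$ for a unit vector. Subtracting the $\sum_i u_i^2 = 1$ term yields $| \mathbf{u}' \mathbf{E} \mathbf{u} | \le (K-1)\mu$, and taking the supremum gives $\delta_K \le (K-1)\mu$. Equivalently, one could invoke Gershgorin's disc theorem on the symmetric matrix $\mathbf{E}$, each of whose rows contains exactly $K-1$ entries of magnitude at most $\mu$.

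I expect the only subtle point to be securing the constant $K-1$ rather than $K$: a naive bound such as $\| \mathbf{E} \|_2 \le \sqrt{K \cdot K} \, \| \mathbf{E} \|_{\max}$ drawn from Lemma~\ref{lem:norm} would give only $K\mu$. The improvement comes from isolating the vanishing diagonal of $\mathbf{E}$ before applying Cauchy--Schwarz, which removes precisely the $\sum_i u_i^2$ contribution. Everything else is a routine combination of the unit-norm and coherence hypotheses.
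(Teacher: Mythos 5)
Your proof is correct, but it takes a genuinely different route from the paper's. The paper writes $\mathbf{\Phi}_T'\mathbf{\Phi}_T = (1-\mu)\mathbf{I} + \mathbf{A}$, where $\mathbf{A}$ \emph{keeps} $\mu$ on its diagonal, and then applies the triangle inequality together with the crude elementwise bound $\|\mathbf{A}\|_2 \le \sqrt{K^2}\,\|\mathbf{A}\|_{\max} = K\mu$ from Lemma~\ref{lem:norm}; shifting by $(1-\mu)\mathbf{I}$ rather than $\mathbf{I}$ is exactly how the paper rescues the constant $K-1$ from a bound that would otherwise give $K\mu$ --- the same pitfall you flagged, resolved by the opposite maneuver (inflating the diagonal of the perturbation instead of zeroing it). Your argument instead strips the diagonal, bounds the quadratic form $|\mathbf{u}'\mathbf{E}\mathbf{u}|$ directly via Cauchy--Schwarz (or Gershgorin), and thereby controls $\bigl|\,\|\mathbf{\Phi}_I\mathbf{u}\|_2^2 - \|\mathbf{u}\|_2^2\,\bigr|$ from \emph{both} sides at once. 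This buys you something real: the definition (\ref{eq:RIP}) of $\delta_K$ is two-sided, and the paper's chain only establishes the upper inequality $\|\mathbf{\Phi}_T\mathbf{x}_T\|_2^2 \le \left(1+(K-1)\mu\right)\|\mathbf{x}_T\|_2^2$, leaving the lower inequality unaddressed --- indeed, the paper's own decomposition yields only $\lambda_{\min} \ge 1-(K+1)\mu$ on that side, and one would have to re-shift with $(1+\mu)\mathbf{I}$ to recover $1-(K-1)\mu$. Your version is therefore not merely an alternative but a more complete proof of the lemma as stated, at the cost of slightly less mechanical bookkeeping than the norm inequalities of Lemma~\ref{lem:norm} provide.
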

\begin{proof}

Using Lemma \ref{lem:norm}, we have
\begin{eqnarray}
  \left\| {{{\mathbf{\Phi }}_T}{{\mathbf{x}}_T}} \right\|_2^2 &=& {\left\| {{\mathbf{x}}_T'{\mathbf{\Phi }}_T'{{\mathbf{\Phi }}_T}{{\mathbf{x}}_T}} \right\|_2} \nonumber   \\
&\leq&  {\left\| {{\mathbf{x}}_T'} \right\|_2}{\left\| {{\mathbf{\Phi }}_T'{{\mathbf{\Phi }}_T}} \right\|_2}{\left\| {{{\mathbf{x}}_T}} \right\|_2} \nonumber   \\
\label{eq:lemma2} &=& {\left\| {{\mathbf{\Phi }}_T'{{\mathbf{\Phi
}}_T}} \right\|_2}\left\| {{{\mathbf{x}}_T}} \right\|_2^2
\end{eqnarray}
where $T$ is the support of $\mathbf{x}$ (a set with the locations
of the non-zero elements of $\mathbf{x}$), ${{\mathbf{x}}_T}$ is a
vector composed of the elements of $\mathbf{x}$ indexed by $T$.
Noting that $(i,j)$-th element of ${{\mathbf{\Phi
}}_T'{{\mathbf{\Phi }}_T}}$ is $\langle\varphi_i , \varphi_j
\rangle$ and $\varphi_i$ is the unit norm vector, it is clear that
\begin{eqnarray}
{\mathbf{\Phi }}_T'{{\mathbf{\Phi }}_T} = \left( \begin{matrix}
1 & \langle\varphi_1 , \varphi_2 \rangle  & \cdots & \langle\varphi_1 , \varphi_n \rangle  \\
\langle\varphi_2 , \varphi_1 \rangle  & 1 & \cdots & \langle\varphi_2 , \varphi_n \rangle  \\
& & \cdots & \\
\langle\varphi_n , \varphi_1 \rangle & \langle\varphi_n , \varphi_2 \rangle & \cdots & 1
\end{matrix} \right) .
\end{eqnarray}
%
%
Now, let
\begin{eqnarray}
{{\mathbf{\Phi }}_T'{{\mathbf{\Phi }}_T}} = { (1 - \mu){\mathbf{I}}
+ {\mathbf{A}}}
\end{eqnarray}
then
\begin{eqnarray}
\mathbf{A} = \left( \begin{matrix} \mu & \langle\varphi_1 , \varphi_2 \rangle  & \cdots & \langle\varphi_1 , \varphi_K \rangle  \\
\langle\varphi_2 , \varphi_1 \rangle  & \mu & \cdots & \langle\varphi_2 , \varphi_K \rangle  \\
& & \cdots & \\
\langle\varphi_K , \varphi_1 \rangle & \langle\varphi_K , \varphi_2 \rangle & \cdots & \mu
\end{matrix} \right)
\end{eqnarray}
and thus $ {\left\| {\mathbf{A}}
\right\|_{\max }}  = \mu$. Hence,
\begin{eqnarray}
\label{eq:xx1} {\left\| {{\mathbf{\Phi }}_T'{{\mathbf{\Phi }}_T}}
\right\|_2} &=&
{\left\|{ (1 - \mu){\mathbf{I}} + {\mathbf{A}}} \right\|_2}     \\
\label{eq:xx2}
&\leq& {\left\| (1 - \mu){\mathbf{I}} \right\|_2} + {\left\| {\mathbf{A}} \right\|_2} \\
\label{eq:xx3}
&\leq&  1 - \mu  + \sqrt {K^2 } {\left\| {\mathbf{A}} \right\|_{\max }}     \\
\label{eq:xx4} &=& 1 + \left( {K - 1} \right)\mu
\end{eqnarray}
where (\ref{eq:xx3}) is from Lemma \ref{lem:norm}.
Using (\ref{eq:lemma2}) and (\ref{eq:xx4}), we have
\begin{eqnarray}
\left\| {{{\mathbf{\Phi }}_T}{{\mathbf{x}}_T}} \right\|_2^2 &\leq& {\left\| {{\mathbf{\Phi }}_T'{{\mathbf{\Phi }}_T}} \right\|_2}\left\| {{{\mathbf{x}}_T}} \right\|_2^2 \nonumber   \\
&\leq& \left( {1 + \left( {K - 1} \right)\mu } \right)\left\|
{{{\mathbf{x}}_T}} \right\|_2^2.  \nonumber
\end{eqnarray}
Recalling the definition of the RIP that ${\delta _K}$ is the minimum satisfying (\ref{eq:RIP}), we have $${\delta _K} \leq \left( {K - 1} \right)\mu.$$

\end{proof}

%
%
%
%
%
\vspace{0.5cm}
{\bf Proof of theorem \ref{general}}
\begin{proof}
We will prove the theorem using induction.
In the first iteration ($k = 1$) of the OMP algorithm, $t^k (= t^1)$
becomes the index of the column maximally correlated with the
measurement $\mathbf{y}$, i.e.,
\begin{eqnarray}
{t^k}  = \arg \max_i \left| {\left\langle {{\varphi
_i},{\mathbf{y}}} \right\rangle } \right| .
\end{eqnarray}
%
%

Then, we have
\begin{eqnarray}
\label{eq:zz1} \left| {\left\langle {{\varphi _{t^k} },{\mathbf{y}}}
\right\rangle } \right| &=& \mathop {\max }\limits_i \left|
{\left\langle {{\varphi _i},{\mathbf{y}}} \right\rangle } \right|
\\
\label{eq:zz2} &\geq& \frac{1} {{\sqrt K }}{\left\|
{{\mathbf{\Phi}}_T'{\mathbf{y}}} \right\|_2}
\\
\label{eq:zz3} &\geq& \frac{1} {{\sqrt K }}{\left\| {{\mathbf{\Phi
}}_T'{{\mathbf{\Phi }}_T}{{\mathbf{x}}_T}} \right\|_2}
\\
\label{eq:zz4}
&\geq& \frac{1} {{\sqrt K }}\left( {1 - {\delta _K}} \right){\left\|
{{{\mathbf{x}}_T}} \right\|_2}
\\
\label{eq:zz5}
&\geq& \frac{1}{{\sqrt K }}\left( {1 - \left( {K - 1} \right)\mu } \right){\left\|
{{{\mathbf{x}}_T}} \right\|_2}
\end{eqnarray}
where (\ref{eq:zz3}) is due to ${\mathbf{y}} = {{\mathbf{\Phi }}_T}{{\mathbf{x}}_T}$, (\ref{eq:zz4}) and (\ref{eq:zz5}) follow from Lemma \ref{lem:rips} and \ref{lem:rip_mip}, respectively.

Now, suppose that ${t^k} $ is not belonging to the support of
${\mathbf{x}}$ (i.e., ${t^k}  \notin T$), then
\begin{eqnarray} \label{eq:small1}
\left| \left\langle {{\varphi }_{{t^k} }},\mathbf{y} \right\rangle  \right| & = & {{\left\| \varphi _{{t^k} }'{{\mathbf{\Phi }}_{T}}{{\mathbf{x}}_{T}} \right\|}_{2}}  \label{eq:uu1} \\
 & \leq & {{\left\| \varphi _{{t^k} }'{{\mathbf{\Phi }}_{T}} \right\|}_{2}}{{\left\| {{\mathbf{x}}_{T}} \right\|}_{2}}  \label{eq:uu2} \\
 & = &\sqrt{\sum\limits_{i \in T } {{{| \left\langle\varphi _{{t^k} },{{\varphi }_{i}} \right\rangle|}^{2}}}} {\left\| {{{\mathbf{x}}_T}} \right\|_2}  \label{eq:uu3} \\
 & \leq  & \sqrt{\sum\limits_{i \in T } {{{\mu }^{2}}}} {\left\| {{{\mathbf{x}}_T}} \right\|_2}  \label{eq:uu4} \\
 & = & \sqrt{K} \mu {\left\| {{{\mathbf{x}}_T}} \right\|_2}
\end{eqnarray}
where (\ref{eq:uu4}) is from the definition of $\mu$ ($ \mu =
\mathop {\max }\limits_{i \neq j}  | \langle\varphi_i , \varphi_j
\rangle| $).
This case, however, will never occur if
\begin{eqnarray}
\frac{1} {{\sqrt K }}\left( {1 - \left( {K - 1} \right)\mu }
\right){\left\| {{{\mathbf{x}}_T}} \right\|_2} > \sqrt K \mu
{\left\| {{{\mathbf{x}}_T}} \right\|_2}
\end{eqnarray}
or
\begin{eqnarray} \label{eq:sufficient}
\mu  < \frac{1} {{2K - 1}}.
\end{eqnarray}
In summary, if $\mu  < \frac{1} {{2K - 1}}$, then ${t^k} \in T$ for
the first iteration of the OMP algorithm.

%
%
%
%
%
Now we assume that the former $k$ iterations are successful ($
T^k = \{t^1, t^2,\cdots, t^k \} \in T$) for $1 \leq k \leq K - 1$.
Then it suffices to show that $t^{k+1}$ is in $T$ but not in $T^k$ ($t^{k+1} \in T \backslash T^k$).
Recall from Table \ref{OMP} that the residual at the $k$-th iteration of the OMP is
\begin{eqnarray}
{{\mathbf{r}}^k} = {\mathbf{y}} - {{\mathbf{\Phi }}_{{T^k}}}{{\mathbf{\hat
x}}_{{T^k}}}.
\end{eqnarray}
Since ${\mathbf{y}} = {{\mathbf{\Phi }}_T}{{\mathbf{x}}_T}$ and
${{\mathbf{\Phi }}_{{T^k}}}$ is a submatrix of ${{\mathbf{\Phi
}}_T}$,
%
${{\mathbf{r}}^k} \in span\left( {{{\mathbf{\Phi }}_T}} \right)$
and thus ${{\mathbf{r}}^k}$ can be expressed as a linear combination of the $|T|$
($= K$) columns of ${{\mathbf{\Phi }}_T}$.
Accordingly, we can express ${{\mathbf{r}}^k}$ as ${{\mathbf{r}}^k} =
{\mathbf{\Phi x'}}$ where the support (set of indices for nonzero
elements) of $\mathbf{x'}$ is contained in the support of
$\mathbf{x}$. In this sense, it is natural to interpret that ${{\mathbf{r}}^k}$ is
a measurement of $K$-sparse signal $\mathbf{x'}$ using the sensing
matrix $\mathbf{\Phi}$.
Thus, if (\ref{eq:sufficient}) is satisfied, we guarantee that $t^{k+1} \in T$ at the $(k +
1)$-th iteration.
Noting that the residual ${{\mathbf{r}}^k}$ is orthogonal to the columns already selected\footnote{Since ${{\mathbf{\Phi }}_{{T^k}}}{{\mathbf{\hat x}}_{{T^k}}}$ is a projection of ${\mathbf{y}}$, the error vector ${\mathbf{y}} - {{\mathbf{\Phi }}_{{T^k}}}{{\mathbf{\hat x}}_{{T^k}}}$ (which equals ${{\mathbf{r}}^k}$) is orthogonal to the projection ${{\mathbf{\Phi }}_{{T^k}}}{{\mathbf{\hat x}}_{{T^k}}}$.}, 
index of these columns is not selected again (see the identify step in Table I) and hence $t^{k+1} \in T \backslash T^k$. This concludes the proof.
\end{proof}
%
%
%
%
%

\vspace{1.0cm}


Thus far, we have shown that the OMP algorithm is working perfectly if the sensing matrix $\mathbf{\Phi}$ satisfies the condition $\mu  < \frac{1} {{2K - 1}}$. Interestingly, this condition is not only sufficient but also necessary. We prove this claim by showing that, even with slight relaxation of this condition ($\mu  = \frac{1} {{2K - 1}}$), it is possible that the OMP algorithm cannot perfectly recover $K$-sparse signal.
Note that our construction of $\mathbf{\Phi}$ is similar to Cai, Wang, and Xu's work for proving the tightness of mutual incoherence condition for $\ell_1$-minimization \cite[Remark 3.2]{cai2010stable}.

%
%
%
%
%
%
\begin{rem}[Necessity of $\mu  <  \frac{1} {{2K - 1}}$]
Suppose $\mathbf{\Phi}$ has normalized columns (${\left\| {{\varphi _i}} \right\|_2} = \varphi _i'{\varphi _i} = 1$) and also satisfies $\mu = \frac{1} {{2K - 1}}$.
Then it is clear that ${{\mathbf{\Phi}}'}{\mathbf{\Phi }}  \in \mathbb{R}^{n \times n}$ has a unit diagonal and the absolute
value of the off-diagonal elements is upper bounded by $ \frac{1} {{2K - 1}}$.
Now consider
$${{\mathbf{\Phi }}'}\mathbf{\Phi }={\left( \begin{matrix}
   1 & -\frac{1}{2K-1} & \cdots  & -\frac{1}{2K-1}  \\
   -\frac{1}{2K-1} & 1 & \cdots  & -\frac{1}{2K-1}  \\
       &   & \cdots  &  \\
   -\frac{1}{2K-1} & -\frac{1}{2K-1}  & \cdots  &   1  \\
\end{matrix} \right)}.$$
Then ${{\mathbf{\Phi }}'}{\mathbf{\Phi }}$ is symmetric and positive
semi-definite matrix, and hence $\mathbf{\Phi }$ can be found by an eigen-decomposition of ${{\mathbf{\Phi }}'}\mathbf{\Phi
}$\cite{cai2010stable}.
Note that an $n \times n$ matrix $\mathbf{K}$ with $(\mathbf{K})_{i,i} = a$ and $(\mathbf{K})_{i,j} = b$ for $i \neq j$ is invertible if and only if $a + (n - 1)b \neq 0$.
Hence, for $a = 1$ and $b = - \frac{1}{2K-1}$, ${{\mathbf{\Phi }}'}{\mathbf{\Phi }}$ is not
invertible for the choice of $n = 2K$.
In this case, eigen-decomposition of ${{\mathbf{\Phi }}'}\mathbf{\Phi }$ becomes
\begin{eqnarray}
{{\mathbf{\Phi }}'}\mathbf{\Phi } = \mathbf{U} \mathbf{\Lambda}
 \mathbf{U}'
\end{eqnarray}
and
$$\mathbf{\Phi } =  \sqrt{ \mathbf{\Lambda } }\mathbf{U}'$$
where
\begin{equation*}
         \Lambda = \left( \begin{array}{ll}
                                                                \begin{array}{llll}
 \lambda_{1} & 0                       & \cdots  & 0                       \\
 0                       & \lambda_{2} & \cdots  & 0                       \\
                         &                         & \cdots  &                         \\
 0                       & 0                       & \cdots  & \lambda_{l}
                                                                \end{array}       &
\underbrace{
                                                                \begin{array}{llll}
 0 & \cdots  & 0  \\
 0 & \cdots  & 0  \\
   &         &    \\
 0 & \cdots  & 0
                                                                \end{array}
}_{2K - l}
                                \end{array}
           \right)
\end{equation*}
and $\{\lambda_i\}_{i = 1,2, \cdots, l}$ are $ l $ nonzero
eigenvalues of ${{\mathbf{\Phi}}'}{\mathbf{\Phi }}$.
Since the rank of $\mathbf{\Phi }$ is $l \, (< 2K)$, there exists a vector $\mathbf{z} \in \mathbb{R}^{2K}$ (which by definition is $2K$-sparse vector) in the null space of ${\mathbf{\Phi}}$ obeying ${\mathbf{\Phi z}} = {\mathbf{0}}$. One can
then divide ${\mathbf{z}}$ into two $K$-sparse vectors ${\mathbf{x}_1}$ and $-{\mathbf{x}_2}$ (i.e., $\mathbf{z} = \mathbf{x}_1 -\mathbf{x}_2$).
This gives ${\mathbf{\Phi x}_1} = {\mathbf{\Phi} \mathbf{x}_2}$ so that the OMP algorithm fails to recover $K$-sparse vector.
In fact, no reconstruction algorithm can always guarantee the perfect recovery of $K$-sparse vector under $\mu = \frac{1} {{2K - 1}}$.
\end{rem}
\vspace{10pt}


\begin{thebibliography}{1}

\bibitem{tropp2007signal}
J.\ A. Tropp and A.\ C. Gilbert,
\newblock ``{Signal recovery from random measurements via orthogonal matching
  pursuit},''
\newblock {\em IEEE Trans. Inf. Theory}, vol. 53, pp. 4655--4666, Dec. 2007.

\bibitem{davenport2010analysis}
M.\ A. Davenport and M.\ B. Wakin,
\newblock ``{Analysis of Orthogonal Matching Pursuit using the restricted
  isometry property},''
\newblock {\em IEEE Trans. Inf. Theory}, vol. 56, pp. 4395-4401, Sept. 2010.

\bibitem{candes2008restricted}
E.\ J. Cand{\`e}s,
\newblock ``{The restricted isometry property and its implications for
  compressed sensing},''
\newblock {\em Comptes \, Rendus \, Mathematique}, vol. 346, pp. 589-592, 2008.

%


\bibitem{golub1996matrix}
G.\ H.\ Golub and C.\ F.\ Van~Loan,
\newblock {\em {Matrix computations}},
\newblock Johns Hopkins Univ Press, 1996.

\bibitem{tropp2004greed}
J.\ A.\ Tropp,
\newblock ``{Greed is good: Algorithmic results for sparse approximation},''
\newblock {\em IEEE Trans. Inf. Theory}, vol. 50, pp. 2231-2242, Oct. 2004.


\bibitem{wang_asilomar}
J.\ Wang and B.\ Shim,
\newblock ``{New bounds for restricted isometric constants in orthogonal matching pursuit and orthogonal multi matching pursuit},''
preparing for submission to {\em IEEE Trans. Inf. Theory}. (Also, submitted to {\em Asilomar Conference} 2011)

\bibitem{dai2009subspace}
W.\ Dai and O.\ Milenkovic,
\newblock ``{Subspace pursuit for compressive sensing signal reconstruction},''
\newblock {\em IEEE Trans. Inf. Theory}, vol. 55, pp. 2230--2249, May 2009.

\bibitem{huo2001}
D.\ L.\ Donoho and X.\ Huo,
\newblock ``{Uncertainty principles and ideal atomic decomposition},''
\newblock {\em IEEE Trans. Inf. Theory}, vol. 47, pp. 2845-2862, Nov. 2001.


\bibitem{cai2010stable}
T.\, T.\, Cai, L.\ Wang, and G.\ Xu,
\newblock ``Stable recovery of sparse signals and an oracle inequality,''
\newblock {\em IEEE Trans. Inf.  Theory}, vol. 56, pp. 3516--3522, July 2010.


\end{thebibliography}
\end{document}